\newtheorem{theorem}{Theorem}[section]
\newtheorem{definition}[theorem]{Definition}
\newtheorem{lemma}[theorem]{Lemma}
\newtheorem{polyrule}{Rule}[section]
\DeclareMathOperator{\operatorClassNP}{NP}
\newcommand{\classNP}{\ensuremath{\operatorClassNP}}
\DeclareMathOperator{\operatorClassFPT}{FPT}
\newcommand{\classFPT}{\ensuremath{\operatorClassFPT}}
\def\etal{{\em et al.}}
\newcommand{\fas}{feedback arc set\xspace}
\newcommand{\FAST}{\textsc{$k$-FAST}\xspace}
\newcommand{\wFAST}{$k$-\textsc{WFAST}\xspace}
\title{Kernels for Feedback Arc Set In Tournaments}
\author{
 St{\'e}phane Bessy\thanks{%
  LIRMM -- Universit{\'e} de Montpellier 2, CNRS,
  161 rue Ada,
  34392 Montpellier,
  France.
  \texttt{\{bessy|paul|perez|thomasse\}@lirmm.fr}}
 \and Fedor V. Fomin\thanks{%
  Department of Informatics, University of Bergen, N-5020 Bergen, Norway.
  \texttt{\{fedor.fomin|saket.saurabh\}@ii.uib.no}}
 \and Serge Gaspers\thanks{%
  Centro de Modelamiento Matem{\'a}tico, Universidad de Chile, 8370459 Santiago de Chile.
  \texttt{sgaspers@dim.uchile.cl} 
 }
 \and Christophe Paul\addtocounter{footnote}{-3}\footnotemark
 \and Anthony Perez\addtocounter{footnote}{-1}\footnotemark
 \and Saket Saurabh\footnotemark
 \and St{\'e}phan Thomass{\'e}\addtocounter{footnote}{-2}\footnotemark}
\date{}
\begin{document}

\maketitle

%
%
%

\begin{abstract}
A tournament $T = (V,A)$ is a directed graph in which there is exactly one arc between every pair of distinct vertices.  Given a digraph on $n$ vertices and an integer parameter $k$, the {\sc Feedback Arc Set} problem asks whether the given digraph has a set of $k$ arcs whose removal results in an acyclic digraph. The  
{\sc Feedback Arc Set} problem restricted to tournaments is known as  the {\sc $k$-Feedback Arc Set in Tournaments ($k$-FAST)} problem.  In this paper we obtain a linear vertex kernel for \FAST{}. That is, 
we give a polynomial time algorithm which given an input instance $T$ to \FAST{} obtains an equivalent instance $T'$ on $O(k)$ vertices. In fact, given any fixed $\epsilon > 0$, the kernelized instance has at most 
$(2 + \epsilon)k$ vertices. Our result improves the previous known bound of $O(k^2)$ on the kernel 
size for \FAST{}. Our kernelization algorithm solves the problem on a subclass of tournaments in polynomial time and uses a known polynomial time approximation scheme for \FAST.
\end{abstract}

\section{Introduction}
Given a directed graph $G=(V,A)$ on $n$ vertices and an integer parameter $k$, the {\sc Feedback 
Arc Set} problem asks whether the given digraph has a set of $k$ arcs whose removal results in an acyclic directed graph.
In this paper, we consider this problem in a special class of directed graphs, {\em tournaments}. A tournament $T = (V , A)$ is a directed graph in which there is exactly one directed arc between every pair of vertices. More formally the problem we consider is defined as follows. 


\medskip

\begin{quote}
\textsc{$k$-Feedback Arc Set in Tournaments} (\FAST):  Given a 
tournament $T = (V,A)$ and a positive integer $k$, does there exist a subset 
$F\subseteq A$ of at most $k$ arcs whose removal makes $T$ acyclic. 
\end{quote}

\medskip

In the weighted version of \FAST{}, we are also given integer weights (each weight is at least one) on the arcs and the objective is to find a \fas of weight at most $k$. This problem is called {\sc $k$-Weighted Feedback Arc Set in Tournaments ($k$-WFAST)}.

Feedback arc sets in tournaments are well studied from the combinatorial 
\cite{EM65,Jung70,RP70,SB61,Spencer71,Younger63}, statistical \cite{Slater61} and algorithmic \cite{ACN05,Alon06,CFR06,MathieuS07,Zyulen05,ZHJW07} points of view. The problems \FAST and \wFAST have several applications. In \emph{rank aggregation} we are given several rankings of a set of objects, and we wish to produce a single ranking that on average is as consistent as possible with the given ones, according to some chosen measure of consistency. This problem has been studied in the context of voting 
\cite{Borda1781,Condorcet1785}, machine learning \cite{CSS97}, and search engine ranking \cite{DKNS01.1,DKNS01.2}. A natural consistency measure for rank aggregation is the number of pairs that occur in a different order in the two rankings. This leads to \emph{Kemeny rank aggregation} \cite{Kemeny59,KS62}, a special case of \wFAST.

The \FAST{} problem is known to be \classNP{}-complete by recent results of Alon~\cite{Alon06} and Charbit \etal~\cite{CTY07} 
while \wFAST{} is known to be \classNP{}-complete by Bartholdi III \etal~\cite{BartholdiTT89}. From an approximation perspective, \wFAST{} is APX-hard \cite{Speckenmeyer89} but admits a 
polynomial time approximation scheme when the edge weights are bounded by a constant \cite{MathieuS07}. The problem is also well studied in parameterized complexity. In this area, a problem with input size 
$n$ and a parameter $k$ is said to be fixed parameter tractable (\classFPT{}) if there exists an algorithm to solve 
this problem in time $f(k)\cdot n^{O(1)}$, where $f$ is an arbitrary function of $k$. Raman and Saurabh~\cite{RS06}  showed that \FAST and \wFAST are \classFPT{} by obtaining an algorithm running in time $O(2.415^k \cdot k^{4.752} + n^{O(1)})$. Recently, Alon \etal~\cite{ALS09} have improved this result by giving an 
algorithm for \wFAST{} running in time $O(2^{O({\sqrt k} \log^2 k)}+ n^{O(1)})$. This algorithm runs in sub-exponential time, a trait uncommon to parameterized algorithms. In this paper we investigate \FAST\  
from the view point of kernelization, currently one of the most active subfields of parameterized algorithms. 

A parameterized problem is said to admit a {\it polynomial kernel} if there is a polynomial (in $n$) time algorithm, called a {\em kernelization} algorithm, that reduces the input instance to an instance whose size is bounded by a polynomial $p(k)$ in $k$, while preserving the answer. This reduced instance is called a {\em $p(k)$ kernel} for the problem. When $p(k)$ is a linear function of $k$ then the corresponding kernel is a linear kernel. Kernelization has been at the forefront of research in parameterized complexity 
in the last couple of years, leading to various new polynomial kernels as well as tools to show that several problems do not have a polynomial kernel under some complexity-theoretic assumptions~\cite{BodlaenderDFH08,abs-0904-0727,BousquetDTY09,DLS09,ThomasseT09}. In this paper we continue the current theme of research on kernelization and obtain a {\em linear vertex} kernel for \FAST{}. That is, 
we give a polynomial time algorithm which given an input instance $T$ to \FAST{} obtains an equivalent instance $T'$ on $O(k)$ vertices.  More precisely, given any fixed $\epsilon > 0$, we find a kernel with a most $(2 + \epsilon)k$ vertices in polynomial time. The reason we call it a linear \emph{vertex} kernel is that, even though the number of vertices in the reduced instance is at most $O(k)$, the number 
of arcs is still $O(k^2)$. Our result improves the previous known bound of $O(k^2)$ on the vertex kernel size for \FAST{}~\cite{ALS09,DGHNT06}.  For our kernelization algorithm we find a subclass of tournaments where one can find a minimum sized 
\fas\ in polynomial time (see Lemma~\ref{lem:closure}) and use the known polynomial time approximation scheme for \FAST by Kenyon-Mathieu and Schudy~\cite{MathieuS07}. The polynomial time algorithm for 
a subclass of tournaments could be of independent interest.


The paper is organized as follows. In Section~\ref{sec:prelim}, we give some definition and preliminary results regarding feedback arc sets. In Section~\ref{sec:fast} we give a linear 
vertex kernel for \FAST{}.  Finally we conclude with some remarks in Section~\ref{sec:concl}. 

\section{Preliminaries}
\label{sec:prelim}
Let $T = (V,A)$ be a tournament on $n$ vertices. We use $T_\sigma = (V_\sigma, A)$ to denote a 
tournament whose vertices are ordered under a fixed ordering $\sigma = v_1, \ldots, v_n$ (we also use $D_\sigma$ for an ordered directed graph). We say that an arc $v_iv_j$ of $T_\sigma$ is a \emph{backward arc} if $i > j$, otherwise we call it a 
\emph{forward arc}. Moreover, given any partition $\mathcal{P} := \{V_1, \ldots, V_l\}$ of $V_\sigma$, where every $V_i$ is an interval according to the ordering of $T_\sigma$, we use $A_B$ to denote all arcs between the intervals (having their endpoints in different intervals), and $A_I$ for all arcs within the intervals. 
If $T_\sigma$ contains no backward arc, then we say that it is \emph{transitive}. 

For a vertex $v \in V$ we denote its \emph{in-neighborhood} by $N^-(v):=\{u\in V \mid u v \in A\}$ and its \emph{out-neighborhood} by $N^+(v):=\{u\in V \mid v u \in A\}$.
A set of vertices $M \subseteq V$ is a \emph{module} if and only if $N^+(u) \setminus M = N^+(v) \setminus M$ for every $u,v \in M$.
For a subset of arcs $A' \subseteq A$, we define $T[A']$ to be the digraph $(V',A')$ where $V'$ is the union of endpoints of the arcs in $A'$. Given an ordered digraph $D_{\sigma}$ and an arc 
$e = v_i v_j$, $S(e) = \{v_i,\hdots, v_j\}$ denotes the \emph{span} of $e$.
The number of vertices in $S(e)$ is called the \emph{length} of $e$ and is denoted by $l(e)$.
Thus, for every arc $e = v_i v_j$, $l(e) = |i - j| + 1$. Finally, for every vertex $v$ in the span of $e$, we say that $e$ is \emph{above} $v$. 


In this paper, we will use the well-known fact that every acyclic tournament admits a transitive ordering. In particular, we will consider \emph{maximal transitive modules}. We also need the following result for our kernelization algorithm. 


\begin{lemma}(\cite{RS06})
Let $D = (V,A)$ be a directed graph and $F$ be a minimal \fas of $D$. Let $D'$ be the graph obtained from 
$D$ by reversing the arcs of $F$ in $D$, then $D'$ is acyclic.
\end{lemma}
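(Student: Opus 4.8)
The plan is to reduce everything to a single linear ordering of the vertices. Since $F$ is a \fas, the digraph $D - F$ obtained by deleting the arcs of $F$ is acyclic and hence admits a topological ordering $\sigma = v_1, \ldots, v_n$; with respect to $\sigma$, every arc of $A \setminus F$ is a forward arc. The whole argument then rests on showing that the minimality of $F$ forces every arc of $F$ to be a \emph{backward} arc in $\sigma$, because once this is established, reversing these arcs turns each of them into a forward arc and leaves a digraph all of whose arcs respect the order $\sigma$.

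The crux is therefore the following claim: no arc of $F$ is forward in $\sigma$. I would argue this by contradiction. Suppose some $e = v_i v_j \in F$ has $i < j$, and consider the digraph $D - (F \setminus \{e\}) = (D - F) + e$. Every arc of this digraph points forward in $\sigma$ — the arcs of $A \setminus F$ by the choice of $\sigma$, and $e$ by the assumption $i < j$ — so it respects a fixed linear order and thus contains no directed cycle, i.e.\ it is acyclic. But then $F \setminus \{e\}$ is already a \fas, contradicting the minimality of $F$. Hence each arc of $F$ is a backward arc in $\sigma$.

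To finish, reversing a backward arc $v_i v_j$ (with $i > j$) produces the forward arc $v_j v_i$. Consequently, in $D'$ both the untouched arcs of $A \setminus F$ and the reversed arcs of $F$ are forward with respect to $\sigma$, so $D'$ respects the linear order $\sigma$ and therefore contains no directed cycle, as claimed. I do not anticipate a genuine obstacle here: the only points to verify are that reinserting a single forward arc $e$ keeps $\sigma$ a valid topological ordering (immediate, precisely because $e$ is forward) and that the reversal does not accidentally create a short directed cycle with some pre-existing arc — which is impossible once every arc of $D'$ strictly respects $\sigma$.
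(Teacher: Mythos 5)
Your proof is correct. Note that the paper itself gives no proof of this lemma --- it is quoted with a citation to Raman and Saurabh --- so there is no internal argument to compare against; your proof (fix a topological order of $D-F$, use minimality to force every arc of $F$ to be backward in that order, then observe that reversal makes every arc forward) is the standard argument for this fact, and the minimality step is exactly where it must be invoked, since the statement fails for non-minimal feedback arc sets.
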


In this paper whenever we say {\em circuit}, we mean a directed cycle. Next we introduce a definition which is useful for a lemma we prove later. 

\begin{definition}
\label{def:certificate}
Let $D_\sigma = (V_\sigma,A)$ be an ordered directed graph and let $f = vu$ be a backward arc of $D_\sigma$. We call \emph{certificate} of $f$, and denote it by $c(f)$, any directed path from $u$ to $v$ using only forward arcs in the span of $f$ in $D_\sigma$.
\end{definition}

Observe that such a directed path together with the backward arc $f$ forms a directed cycle in $D_\sigma$ whose only backward arc is $f$.


\begin{definition}
\label{def:certify}
Let $D_\sigma = (V_\sigma, A)$ be an ordered directed graph, and let $F \subseteq A$ be a set of backward arcs of $D_\sigma$. We say that we can \emph{certify} $F$ whenever it is possible to find a set $\mathcal{F} = \{c(f) : f \in F\}$ of arc-disjoint certificates for the arcs in $F$.
\end{definition}

Let $D_\sigma=(V_\sigma,A)$ be an ordered directed graph,  and let $F \subseteq A$ be a subset of backward arcs of $D_\sigma$.
We say that we can certify the set $F$ using {\em only arcs from $A'\subseteq A$} if $F$ 
can be certified by a collection ${\cal F}$ such that the union of the arcs of the certificates in $\cal F$ is contained 
in $A'$. In the following, $fas(D)$ denotes the {\em size} of a minimum feedback arc set, that is, the cardinality of a minimum sized set $F$ of arcs whose removal makes $D$ acyclic.

\begin{lemma}
\label{lem:fas}
Let $D_\sigma$ be an ordered directed graph, and let $\mathcal{P}=\{V_1,\dots, V_l\}$ be a partition of $D_\sigma$ into intervals. Assume that the set $F$ of all backward arcs of $D_\sigma[A_B]$ can be certified 
using only arcs from $A_B$. 
Then $fas(D_\sigma)=fas(D_\sigma[A_I])+fas(D_\sigma[A_B])$. Moreover, there exists a minimum sized  \fas\  of $D_\sigma$ 
containing $F$. 
\end{lemma}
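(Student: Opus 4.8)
The plan is to prove the two inequalities $fas(D_\sigma) \ge fas(D_\sigma[A_I]) + fas(D_\sigma[A_B])$ and $fas(D_\sigma) \le fas(D_\sigma[A_I]) + fas(D_\sigma[A_B])$ separately, and to obtain the ``moreover'' statement as a byproduct of the construction used for the upper bound.

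For the lower bound I would take an arbitrary minimum \fas $F^{*}$ of $D_\sigma$ and split it as $F^{*} = (F^{*} \cap A_I) \cup (F^{*} \cap A_B)$. Every circuit lying entirely inside a single interval uses only arcs of $A_I$, so it must be destroyed by $F^{*} \cap A_I$; hence $F^{*} \cap A_I$ is a \fas of $D_\sigma[A_I]$ and $|F^{*} \cap A_I| \ge fas(D_\sigma[A_I])$. Symmetrically, every circuit of $D_\sigma[A_B]$ uses only arcs of $A_B$ and must be destroyed by $F^{*} \cap A_B$, so $|F^{*} \cap A_B| \ge fas(D_\sigma[A_B])$. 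Summing the two gives the lower bound, and this direction requires no hypothesis.

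The upper bound is where the certification hypothesis is essential, and the key idea is to use $F$ itself---the set of all backward arcs of $A_B$---as the chosen optimum inside $A_B$. First I would argue that $F$ is a minimum \fas of $D_\sigma[A_B]$: deleting all backward arcs leaves only forward arcs, which respect $\sigma$ and hence form an acyclic graph, so $F$ is a \fas; conversely, each backward arc $f$ together with its certificate $c(f)$ forms a circuit of $D_\sigma[A_B]$, and since the certificates are arc-disjoint and use only forward arcs while the arcs of $F$ are distinct backward arcs, these $|F|$ circuits are pairwise arc-disjoint, forcing $fas(D_\sigma[A_B]) \ge |F|$. Thus $fas(D_\sigma[A_B]) = |F|$.

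It then remains to show that $F_I \cup F$ is a \fas of $D_\sigma$, where $F_I$ is a minimum \fas of $D_\sigma[A_I]$; this produces a \fas of size $fas(D_\sigma[A_I]) + |F| = fas(D_\sigma[A_I]) + fas(D_\sigma[A_B])$ and, being a \fas containing $F$, also settles the ``moreover'' part. To see that $D_\sigma - (F_I \cup F)$ is acyclic, I would assign to each vertex the index of the interval containing it. After deleting $F$, every surviving arc of $A_B$ is forward and therefore strictly increases this index, while arcs of $A_I$ leave it unchanged. Along any circuit the index must return to its starting value, so no forward arc of $A_B$ can occur; the circuit then lives inside a single interval and uses only arcs of $A_I$, contradicting the fact that $F_I$ destroys all circuits of $D_\sigma[A_I]$. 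I expect this index-monotonicity argument, together with the realization that one should commit to the canonical choice $F$ (rather than an arbitrary $A_B$-optimum) so that the surviving between-interval arcs are exactly the forward ones, to be the crux of the proof; the hypothesis enters precisely by guaranteeing that this canonical $F$ is in fact optimal for $A_B$.
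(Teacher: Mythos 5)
Your proof is correct and takes essentially the same route as the paper: the lower bound by splitting a minimum \fas over the bipartition $(A_I,A_B)$, and the upper bound by checking that $F$ together with a minimum \fas of $D_\sigma[A_I]$ is a \fas of $D_\sigma$ (via the interval-index argument), which simultaneously yields the ``moreover'' claim. If anything, you are more explicit than the paper at the one point where the certification hypothesis actually enters --- the $|F|$ pairwise arc-disjoint circuits $f\cup c(f)$ inside $D_\sigma[A_B]$ forcing $fas(D_\sigma[A_B])=|F|$ --- a step the paper's own proof leaves implicit.
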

\begin{proof}
For any bipartition of the arc set $A$ into $A_1$ and $A_2$, $fas(D_\sigma) \geq fas(D_\sigma[A_1])+fas(D_\sigma[A_2])$. Hence, in particular for a partition of the arc set $A$ into $A_I$ and $A_B$ we have that  $fas(D_\sigma)\geq fas(D_\sigma[A_I])+fas(D_\sigma[A_B])$. Next, we show that 
$fas(D_\sigma)\leq fas(D_\sigma[A_I])+fas(D_\sigma[A_B])$. This follows from the fact that once we reverse all the arcs in 
$F$, each remaining circuit lies in $D_\sigma[V_i]$ for some $i\in \{1,\ldots, l\}$. In other words once we reverse all the arcs in $F$, every circuit is completely contained in $D_\sigma[A_I]$. This concludes the proof of the first part of the lemma. 
In fact, what we have shown is that there exists a minimum sized  \fas\  of $D_\sigma$ containing $F$.  This concludes the proof of the lemma. 
\end{proof}





\section{Kernels for \FAST}
\label{sec:fast}
In this section we first give a subquadratic vertex kernel of size $O(k\sqrt k) $ for  \FAST\ and then improve on it 
to get our final vertex kernel of size $O(k)$. We start by giving a few reduction rules that will be needed to bound the size of the kernels.

\medskip

\begin{polyrule}
\label{rule:uselessvertex}
If a vertex $v$ is not contained in any triangle, delete $v$ from $T$.
\end{polyrule}

\medskip

\begin{polyrule}
\label{rule:distincttriangles}
If there exists an arc $uv$ that belongs to more than $k$ distinct triangles, then reverse $uv$ and decrease $k$ by $1$.
\end{polyrule}

\medskip

We say that a reduction rule is {\em sound}, if whenever the rule is applied to an instance $(T,k)$ 
to obtain an instance $(T', k')$, $T$ has a \fas\ of size at most $k$ if and only if  $T'$ has a \fas\ of size at most $k'$. 

\begin{lemma}(\cite{ALS09,DGHNT06})
Rules~\ref{rule:uselessvertex} and~\ref{rule:distincttriangles} are sound and can be applied in polynomial time.
\end{lemma}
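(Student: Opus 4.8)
The plan is to treat the two rules separately, noting first that the polynomial-time claim is immediate for both: testing whether a given vertex lies in a directed triangle, or counting how many triangles contain a given arc, amounts to inspecting intersections of in- and out-neighborhoods and costs $O(n^3)$ overall. All the content is in soundness. Throughout I would use the characterization that for a tournament $fas(T)=\min_\sigma b(\sigma)$, where $b(\sigma)$ denotes the number of backward arcs of the ordering $\sigma$. The inequality $fas(T)\le b(\sigma)$ is clear, since the backward arcs of any ordering form a \fas; the reverse inequality follows from the earlier lemma (\cite{RS06}) that reversing the arcs of a minimal \fas\ produces an acyclic digraph, whose topological order then realizes that \fas\ exactly as its backward-arc set.

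For Rule~\ref{rule:uselessvertex} I would first show that a vertex $v$ contained in no triangle lies on no circuit. Suppose not, and take a shortest circuit $C$ through $v$, say $v=x_0\to x_1\to\cdots\to x_m=v$ with $m\ge 3$. Consider the tournament arc between $v$ and $x_2$: if $x_2\to v$ then $v\to x_1\to x_2\to v$ is a triangle through $v$, while if $v\to x_2$ then $v\to x_2\to\cdots\to v$ is a strictly shorter circuit through $v$; both contradict our choices. Hence no arc incident to $v$ lies on a circuit, every circuit of $T$ survives in $T-v$, and conversely any \fas\ of $T-v$ is a \fas\ of $T$. Thus $fas(T)=fas(T-v)$ and the rule is sound.

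For Rule~\ref{rule:distincttriangles}, write the arc as $u\to v$; a triangle containing it is a vertex $w$ with $v\to w$ and $w\to u$, and two such triangles share only the arc $uv$, because their remaining arcs $vw,wu$ are distinct for distinct $w$. Consequently any \fas\ that avoids $uv$ must delete a private arc of each of the more than $k$ triangles, hence has size more than $k$; so \emph{every} \fas\ of size at most $k$ contains $uv$. Translating to orderings, in any $\sigma$ with $b(\sigma)\le k$ the arc $uv$ is backward, i.e.\ $v$ precedes $u$. Let $T'$ be obtained by reversing $uv$. Running the same $\sigma$ on $T'$ turns this single backward arc into a forward one and leaves everything else unchanged, so $b_{T'}(\sigma)=b(\sigma)-1\le k-1$, giving $fas(T')\le k-1$ whenever $fas(T)\le k$. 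The converse is pure bookkeeping and needs no triangle hypothesis: for any ordering $\sigma'$ of $T'$, the pair $\{u,v\}$ contributes a backward arc to exactly one of $T,T'$ depending on the relative order of $u$ and $v$, so $fas(T)\le fas(T')+1\le k$. I expect the reversal step to be the main obstacle, precisely because reversing (rather than deleting) an arc can create new circuits; the ordering characterization is what tames it, letting me track the backward-arc count across the reversal instead of reasoning about cycles directly.
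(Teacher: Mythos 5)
Your proof is correct, but note that the paper itself does not prove this lemma at all: it is imported wholesale by citation from Alon \etal\ and Dom \etal, so your self-contained argument is necessarily a different route. Your two soundness arguments are in fact the standard ones from those cited works, and you handle the one genuinely delicate point correctly: for Rule~\ref{rule:distincttriangles}, reversing (rather than deleting) the arc $uv$ can create new circuits, so the naive claim ``$F\setminus\{uv\}$ is a \fas\ of $T'$'' does not follow directly from $F$ being a \fas\ of $T$ containing $uv$. Your fix --- the characterization $fas(T)=\min_\sigma b(\sigma)$, whose nontrivial direction rests exactly on the paper's quoted lemma of Raman and Saurabh about reversing a minimal \fas\ --- is the right tool, and it is the same mechanism the paper implicitly relies on elsewhere (e.g.\ in Theorem~\ref{thm:FAST}, where an optimal solution is converted into an ordering whose backward arcs are the solution). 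The disjointness argument for Rule~\ref{rule:distincttriangles} (distinct triangles through $uv$ share only $uv$, so a \fas\ avoiding $uv$ needs one private arc per triangle) and the shortest-circuit argument for Rule~\ref{rule:uselessvertex} (a vertex on a circuit is on a triangle) are both sound and complete. What your version buys is that the kernelization becomes fully self-contained modulo the PTAS; what the paper's citation buys is brevity, since these rules and their proofs were already established in the quadratic-kernel papers it builds on.
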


The Rules~\ref{rule:uselessvertex} and~\ref{rule:distincttriangles} together led to a quadratic kernel for \wFAST~\cite{ALS09}. Earlier, these rules were 
used by Dom \etal~\cite{DGHNT06} to obtain a quadratic kernel for \FAST. We now add a new reduction rule that will allow us to obtain the claimed bound on the kernel sizes for \FAST. 
Given an ordered tournament $T_\sigma=(V_\sigma,A)$, we say that $\mathcal{P} = \{V_1,\ldots,V_l\}$ is a \emph{safe partition} of $V_\sigma$ into intervals whenever it is possible to certify the backward arcs of $T_\sigma[A_B]$ using only 
arcs from $A_B$.

\medskip

\begin{polyrule}
\label{rule:safepartition}
Let $T_\sigma$ be an ordered tournament, $\mathcal{P} = \{V_1,\ldots,V_l\}$ be a safe partition of $V_\sigma$ into intervals and $F$ be the set of backward arcs of $T_\sigma[A_B]$. 
Then reverse all the arcs of $F$ and decrease $k$ by $|F|$. 
\end{polyrule}

\medskip

\begin{lemma}
Rule~\ref{rule:safepartition} is sound.
\end{lemma}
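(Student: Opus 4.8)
The plan is to show that Rule~\ref{rule:safepartition} is sound, meaning that the tournament $T_\sigma$ has a feedback arc set of size at most $k$ if and only if the reversed tournament $T'_\sigma$ (obtained by reversing all arcs of $F$) has a feedback arc set of size at most $k - |F|$. The key observation is that this rule is essentially an application of Lemma~\ref{lem:fas}, which I would invoke directly since the definition of a safe partition is exactly the hypothesis of that lemma: the set $F$ of all backward arcs of $T_\sigma[A_B]$ can be certified using only arcs from $A_B$.

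\medskip

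First I would apply Lemma~\ref{lem:fas} to conclude that $fas(T_\sigma) = fas(T_\sigma[A_I]) + fas(T_\sigma[A_B])$ and, crucially, that there exists a minimum sized feedback arc set of $T_\sigma$ that contains the entire set $F$. The containment of $F$ in some optimal solution is what makes reversing $F$ safe: the arcs of $F$ can be removed ``for free'' in the sense that they belong to an optimal solution. Next I would argue that since $F$ consists precisely of the backward arcs of $T_\sigma[A_B]$ and these are certified using only arcs within $A_B$, reversing the arcs of $F$ makes $T_\sigma[A_B]$ acyclic (so $fas$ of the between-intervals part drops to zero on that portion), and the reversal does not create any new circuits involving backward arcs across intervals. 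Thus $fas(T'_\sigma) = fas(T_\sigma) - |F|$, which translates a solution of size $k$ for $T_\sigma$ into a solution of size $k - |F|$ for $T'_\sigma$ and conversely.

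\medskip

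For the converse direction, I would take a feedback arc set $F'$ of size at most $k - |F|$ in $T'_\sigma$; since reversing $F$ back recovers $T_\sigma$ and the arcs of $F$ form a feedback arc set for the between-interval circuits (by the certificate structure), the union $F' \cup F$ is a feedback arc set of $T_\sigma$ of size at most $k$. Here one must be slightly careful that $F'$ and $F$ are disjoint, which holds because the arcs of $F'$ live in the modified tournament where the arcs of $F$ have been reversed, so no arc of $F$ (in its original orientation) appears among the candidates counted by $F'$; equivalently, one reasons at the level of vertex pairs rather than oriented arcs, counting each reversed pair once.

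\medskip

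The main obstacle I anticipate is the bookkeeping of \emph{which} arcs are counted in each direction and verifying rigorously that reversing $F$ does not inadvertently create new backward arcs spanning across the intervals that would need additional reversals. This is handled by the hypothesis that the certificates use only arcs from $A_B$, so every between-interval circuit is ``killed'' by reversing $F$, and any surviving circuit lies entirely within a single interval $V_i$ (i.e.\ within $T_\sigma[A_I]$), exactly as established in the proof of Lemma~\ref{lem:fas}. Consequently the soundness reduces to a clean additivity statement, and the argument is short once Lemma~\ref{lem:fas} is in hand.
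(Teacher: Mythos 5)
Your proof is correct and follows essentially the same route as the paper: both invoke Lemma~\ref{lem:fas} (whose hypothesis is exactly the definition of a safe partition) to get the additivity $fas(T_\sigma)=fas(T_\sigma[A_I])+fas(T_\sigma[A_B])$ and the existence of a minimum \fas\ containing $F$, and then conclude the equivalence of the two instances. The only difference is that you spell out the two directions of the ``if and only if'' (including the pair-level bookkeeping for disjointness of $F$ and $F'$), which the paper compresses into a single ``Thus''.
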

\begin{proof}
Let $\mathcal{P}$ be a safe partition of $T_\sigma$. Observe that it is possible to certify all the backward arcs, that is $F$, using only arcs in $A_B$. Hence using Lemma~\ref{lem:fas} we have that 
$fas(T_\sigma) = fas(T_\sigma[A_I]) + fas(T_\sigma[A_B])$.  Furthermore, by Lemma~\ref{lem:fas} we also know that 
there exists a minimum sized  \fas\  of $D_\sigma$ containing $F$.  Thus, $T_\sigma$ has a \fas of size at most $k$ if and only if the tournament $T_\sigma '$ obtained from $T_\sigma$ by reversing all the arcs of $F$ 
has a \fas of size at most $k-|F|$.
\end{proof}

\subsection{A subquadratic kernel for \FAST}

In this section, we show how to obtain an $O(k\sqrt k)$ sized vertex kernel for \FAST. To do so, we introduce the following reduction rule. 

\medskip

\begin{polyrule}
\label{rule:reverse}
Let $V_m$ be a \emph{maximal transitive module} of size $p$, and $I$ and $O$ be the set of in-neighbors and out-neighbors of the vertices of $V_m$ in $T$, respectively. 
Let $Z$ be the set of arcs $uv$ such that $u\in O$ and $v\in I$. If $q = |Z| < p$ then reverse all the arcs in $Z$ 
and decrease $k$ by $q$.
\end{polyrule}

\medskip


\begin{figure}[h!]
\begin{center}
	\input{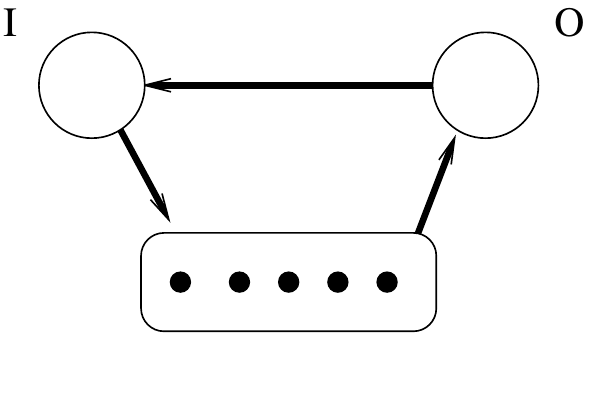_t}
\end{center}
\caption{A transitive module on which Rule~\ref{rule:reverse} applies.}
\end{figure}

\begin{lemma}
Rule~\ref{rule:reverse} is sound and can be applied in linear time.
\end{lemma}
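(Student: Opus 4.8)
The plan is to derive soundness directly from the already-proved soundness of Rule~\ref{rule:safepartition}, by exhibiting an ordering of $T$ under which $\{I,V_m,O\}$ is a safe partition whose between-interval backward arcs are exactly $Z$. Concretely, I would fix an ordering $\sigma$ placing all of $I$ first, then the vertices of $V_m$ in a transitive order (which exists because $V_m$ is transitive), and finally all of $O$, ordering $I$ and $O$ internally in an arbitrary way. Since $V_m$ is a module, every vertex outside $V_m$ is either a common in-neighbour (hence in $I$) or a common out-neighbour (hence in $O$), so $\{I,V_m,O\}$ genuinely partitions $V$ into three intervals of $\sigma$. Taking $A_B$ to be the arcs between these intervals, the arcs from $I$ to $V_m$ and from $V_m$ to $O$ are all forward by the definition of $I$ and $O$, so the only arcs of $A_B$ that can be backward are those from $O$ to $I$, which is precisely the set $F=Z$. (Backward arcs lying inside $I$ or inside $O$ belong to $A_I$, not $A_B$, and so are irrelevant to the safe-partition condition.)

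It then remains to certify $Z$ using only arcs of $A_B$. The key observation is that for every backward arc $uv\in Z$ with $u\in O$ and $v\in I$, and every $m\in V_m$, the two-arc path $v\to m\to u$ is a forward path lying in the span of $uv$: the arc $vm$ is forward because $v\in I$, the arc $mu$ is forward because $u\in O$, and both lie in $A_B$. Because $q=|Z|<p=|V_m|$, I can injectively assign to each arc of $Z$ its own distinct module vertex $m$. Every arc of such a certificate is incident to its private vertex $m$ and has its other endpoint outside $V_m$ (in $I$ or in $O$), so no certificate arc is an arc internal to $V_m$; consequently two arcs coming from certificates with distinct vertices $m_i\ne m_j$ can never coincide, and the certificates are pairwise arc-disjoint. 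Hence $\{I,V_m,O\}$ is a safe partition with backward set $Z$, and Rule~\ref{rule:safepartition} reverses exactly $Z$ while decreasing $k$ by $|Z|=q$. This is precisely the operation of Rule~\ref{rule:reverse}, so the latter inherits soundness.

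For the running time, applying the rule amounts to: (i) finding a maximal transitive module $V_m$; (ii) reading off $I$ and $O$; (iii) collecting the arcs $Z$ from $O$ to $I$ and comparing $q$ with $p$; and (iv) reversing the arcs of $Z$. Steps (ii)--(iv) only scan the adjacency structure a constant number of times and are therefore linear in the size of the tournament, which is $\Theta(n^2)$. For step (i) I would rely on the standard fact that the modular decomposition of a digraph, together with the transitivity data of its quotients, is computable in time linear in the input, from which a maximal transitive module is read off immediately.

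I expect the main obstacle to be the certification step: one must argue carefully that routing each backward arc of $Z$ through its own private module vertex produces genuinely arc-disjoint certificates, which is exactly where the hypothesis $q<p$ (guaranteeing an injection into $V_m$) is used, and one must verify that the ``into-module'' arcs $vm$ and the ``out-of-module'' arcs $mu$ never collide across different certificates. The linear-time extraction of a maximal transitive module is the secondary point needing care, although it reduces to standard modular-decomposition machinery.
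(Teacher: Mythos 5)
Your proposal is correct and follows essentially the same route as the paper: both exhibit $\{I,V_m,O\}$ as a safe partition by using the hypothesis $q<p$ to assign to each backward arc $u_iv_i\in Z$ a private module vertex $w_i\in V_m$, certify it by the two-arc path $v_i w_i u_i$ lying in $A_B$, and then invoke the soundness of Rule~\ref{rule:safepartition}, with linear time coming from modular decomposition. Your write-up is somewhat more explicit than the paper's (verifying the interval ordering, that the backward arcs of $A_B$ are exactly $Z$, and the pairwise arc-disjointness of the certificates), but the underlying argument is the same.
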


\begin{proof}
We first prove that the partition ${\cal P}=\{I,V_m,O\}$ forms a safe partition of the input tournament. Let $V_m'=\{w_1,\ldots,w_q\}\subseteq V_m$ be an arbitrary subset of size $q$ of $V_m$ 
and let $Z=\{u_i v_i~|~1\leq i\leq q\}$. Consider the collection ${\cal F}=\{v_i w_i u_i~|~u_i v_i\in Z, w_i\in V_m'\}$ and notice that it certifies all the arcs 
in $Z$. In fact we have managed to certify all the backwards arcs of the partition using only arcs from $A_B$ and hence ${\cal P}$ forms a safe partition. Thus, by Rule~\ref{rule:safepartition}, 
it is safe to reverse all the arcs from $O$ to $I$. The time complexity follows from the fact that computing a modular decomposition tree can be done in $O(n + m)$ time 
on directed graphs~\cite{MM05}.
\end{proof}

We show that any \textsc{Yes}-instance to which none of the Rules \ref{rule:uselessvertex},~\ref{rule:distincttriangles} and \ref{rule:reverse} could be applied has at most $O(k\sqrt k)$ vertices.

\begin{theorem}
\label{thm:FAST}
Let $(T = (V,A),k)$ be a \textsc{Yes}-instance to \FAST{} which has been reduced according to Rules \ref{rule:uselessvertex},~\ref{rule:distincttriangles} and \ref{rule:reverse}. Then  
$T$ has at most $O(k\sqrt k)$ vertices.
\end{theorem}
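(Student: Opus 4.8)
The plan is to fix a feedback arc set $F$ of size at most $k$ in the reduced tournament $T$, reverse its arcs to obtain a transitive ordering $\sigma$, and then bound the number of vertices by analyzing the structure that the three reduction rules impose on $T_\sigma$. The key quantitative idea is that every vertex must be ``close'' to a backward arc: since Rule~\ref{rule:uselessvertex} has been applied, every remaining vertex lies in some triangle, hence in the span of some backward arc of $T_\sigma$. So I would aim to charge the $n$ vertices to the at most $k$ backward arcs (the reversed arcs of $F$), and control how many vertices a single backward arc can be responsible for.

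**The core counting.**

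First I would invoke Rule~\ref{rule:distincttriangles}: no arc lies in more than $k$ distinct triangles, so in particular each backward arc of $T_\sigma$ participates in at most $k$ triangles. A triangle through a backward arc $e=v_iv_j$ uses a vertex in the span $S(e)$, so the number of vertices lying above a single backward arc that actually form triangles with it is bounded. Combined with the fact that there are at most $k$ backward arcs, this gives a first handle. The delicate point is that a vertex can be counted because it sits in the span of a \emph{long} backward arc without being in a triangle with that specific arc, so I would instead bound the total span length $\sum_{e \in F} l(e)$ and argue that the vertices not covered by short spans force the existence of a large safe partition, contradicting that Rule~\ref{rule:safepartition}/Rule~\ref{rule:reverse} has been applied. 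Concretely, I expect the $\sqrt{k}$ to arise from balancing two regimes: backward arcs of length at most $\sqrt{k}$ cover at most $k \cdot \sqrt{k}$ vertices in aggregate, while there can be only few long backward arcs, and Rule~\ref{rule:reverse} on maximal transitive modules prevents long stretches of vertices with no backward structure.

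**The role of Rule~\ref{rule:reverse}.**

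The decisive step is explaining why Rule~\ref{rule:reverse} forces transitive modules to be small relative to their ``crossing'' arc count. If a maximal transitive module $V_m$ has size $p$, then the set $Z$ of arcs from its out-neighbors to its in-neighbors has size at least $p$ (otherwise the rule would apply), and each such arc in $Z$ corresponds to a backward arc whose span crosses $V_m$. Since there are at most $k$ backward arcs available, this caps the total size of transitive modules, and hence the number of vertices sitting inside long transitive stretches, by something like $O(k)$ times the module count. I would then combine this module bound with the triangle-counting bound from Rule~\ref{rule:distincttriangles}, optimizing the threshold length to get the $O(k\sqrt{k})$ estimate.

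**Main obstacle.**

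The hardest part will be the bookkeeping that turns ``every vertex is above some backward arc'' into a clean charging scheme without double-counting, and in particular arguing rigorously that after exhaustive application of Rule~\ref{rule:reverse}, one cannot have many long backward arcs spanning the same region — this is where the safe-partition machinery (Lemma~\ref{lem:fas}) must be used to derive a contradiction from the assumption that the instance is reduced. I anticipate that the argument needs a careful choice of the length threshold and a pigeonhole step showing that if $n$ exceeded $c \cdot k\sqrt{k}$, then some interval would admit a nontrivial safe partition or a reducible transitive module, contradicting reducedness.
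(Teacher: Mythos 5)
Your setup (reverse a feedback arc set $S$ of size at most $k$, work in the resulting transitive ordering $\sigma$, use Rule~\ref{rule:distincttriangles} to control backward arcs and Rule~\ref{rule:reverse} to control transitive modules) matches the paper's, but the step that actually produces the exponent $3/2$ is missing, and the substitute you sketch would not work. First, a point you get wrong: in a reduced instance there are \emph{no} long backward arcs at all. If a backward arc $e=uv$ had length greater than $2k+2$, pick $2k$ internal vertices $x_1,\dots,x_{2k}$ of its span; the triples $vx_iu$ use pairwise disjoint pairs of arcs, so at most $|S|-1\le k-1$ of them can be spoiled by another arc of $S$, leaving at least $k+1$ distinct triangles through $e$, and Rule~\ref{rule:distincttriangles} would have fired. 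Hence every backward arc has length at most $2k+2$ and $\sum_{e\in S} l(e)\le 2k^2+2k$. But this, combined with your charging of vertices to spans (every vertex lies in some span, by Rule~\ref{rule:uselessvertex}), only yields $n=O(k^2)$: there can be $\Theta(k)$ backward arcs each of length $\Theta(k)$, so there is no regime of ``few long arcs'' to balance against, and no choice of length threshold repairs this.

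The missing idea is a double count exploiting \emph{multiplicity} of coverage. By Rule~\ref{rule:uselessvertex} the first and last vertices of $\sigma$ are affected (incident to $S$), so the unaffected vertices split into at most $2k-1$ maximal intervals of sizes $t_1,\dots,t_{2k-1}$, each a transitive module. Consider the maximal transitive module containing the $i$-th interval: since Rule~\ref{rule:reverse} does not apply, there are at least $t_i$ arcs from its out-neighbours to its in-neighbours. Because the interval's vertices are unaffected, its in-neighbours precede it and its out-neighbours follow it in $\sigma$, so each of these arcs is a backward arc whose span contains the \emph{entire} interval, hence has length at least $t_i$. Charging each backward arc its length, summed over the intervals it covers, gives $\sum_i t_i^2 \le \sum_{e\in S} l(e) \le 2k^2+2k$, and Cauchy--Schwarz over the at most $2k-1$ intervals yields
\[
\sum_i t_i \;\le\; \sqrt{(2k-1)\sum_i t_i^2} \;\le\; \sqrt{(2k-1)(2k^2+2k)} \;=\; O(k\sqrt{k}),
\]
which together with the at most $2k$ affected vertices finishes the proof. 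Note also that, contrary to your ``main obstacle'' paragraph, the safe-partition machinery (Lemma~\ref{lem:fas}) plays no role in the size bound; it is only needed for the \emph{soundness} of Rules~\ref{rule:safepartition} and~\ref{rule:reverse}. And the phenomenon you hoped to rule out---many long backward arcs over the same region---is exactly what large modules force; the proof works not by excluding it but by observing that it consumes the $O(k^2)$ span budget quadratically in the module size.
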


\begin{proof}
Let $S$ be a \fas\ of size at most $k$ of $T$ 
and let $T'$ be the tournament obtained from $T$ by reversing all the arcs in $S$. Let $\sigma$ be the transitive ordering of $T'$ and $T_\sigma=(V_\sigma,A)$ be the ordered tournament corresponding to the ordering $\sigma$. We say that a vertex is \emph{affected} if it is incident to some arc in $S$. Thus, the number of affected vertices is at most $2|S|\leq 2k$. 
The reduction Rule~\ref{rule:uselessvertex} ensures that the first and last vertex of $T_\sigma$ are 
affected. To see this note that if the first vertex in $V_\sigma$ is not affected then it is a source vertex (vertex with in-degree $0$) and hence it is not part of any triangle and thus Rule~\ref{rule:uselessvertex} would have applied. We can similarly argue for the last vertex. 
Next we argue that there is no backward arc $e$ of length greater than $2k+2$ in $T_\sigma$. Assume to the contrary that $e=u v$ is a backward arc with $S(e)=\{v,x_1,x_2,\ldots,x_{2k+1},\ldots,u\}$ and hence 
$l(e)>2k+2$. Consider the collection 
${\cal T}=\{v x_i u~|~1\leq i \leq 2k\}$ and observe that at most $k$ of these triples can contain an arc from 
$S\setminus\{e\}$ and hence there exist at least $k+1$ triplets in $\cal T$ which corresponds to distinct triangles 
all containing $e$. But then $e$ would have been reversed by an application of Rule~\ref{rule:distincttriangles}. Hence,  
we have shown that there is no backward arc $e$ of length greater than $2k+2$ in $T_\sigma$. 
 Thus $\sum_{e \in S} l(e) \le 2k^2+2k$. 
 
We also know that between two consecutive affected vertices there is exactly one maximal transitive module. Let us denote by $t_i$ the number of vertices in these modules, where $i\in \{1,\ldots, 2k - 1\}$. The objective here is to bound the number of vertices in $V_\sigma$ or $V$ using $\sum_{i=1}^{2k - 1} t_i$. To do so, observe that since $T$ is reduced under the Rule~\ref{rule:reverse}, there are at least $t_i$ backward arcs above every module with $t_i$ vertices, each of length at least $t_i$. This implies that $\sum_{i = 1}^{2k - 1} t_i^2 \le \sum_{e \in S} l(e) \le 2k^2+2k$. Now, using the Cauchy-Schwarz inequality we can show the following. 
\begin{eqnarray*}
	\sum_{i=1}^{2k - 1} t_i  =  \sum_{i=1}^{2k - 1} t_i \cdot 1 
			     \le  \sqrt{\sum_{i=1}^{2k - 1} t_i^2 \cdot \sum_{i=1}^{2k - 1} 1} 
			    \le  \sqrt{(2k^2+2k) \cdot (2k - 1)} 
			     =  \sqrt{4k^3+2k^2-k}.
\end{eqnarray*}
Thus every reduced \textsc{Yes}-instance has at most $ \sqrt{4k^3+2k^2-k}+ 2k=O(k\sqrt k)$ vertices.
\end{proof}

\subsection{A linear kernel for \FAST}
\label{section:linear}

We begin this subsection by showing some general properties about tournaments which will be useful in obtaining a linear kernel 
for \FAST.

\subsubsection{Backward Weighted Tournaments}
Let $T_\sigma$ be an ordered tournament with weights on its backward arcs. We call such a tournament a \emph{backward weighted tournament} and denote it by $T_\omega$, and use $\omega(e)$ to denote the weight of a backward arc $e$. For every interval $I:= [v_i, \ldots, v_j]$ we use $\omega(I)$ to denote the total weight of all backward arcs having both their endpoints in $I$, that is, $\omega(I)=\sum_{e=uv}w(e)$ where $u,v\in I$ and $e$ is a backward arc.   


\begin{definition}{\bf (Contraction)} Let $T_\omega=(V_\sigma,A)$ be an ordered tournament with weights on its backward arcs and $I = [v_i, \ldots, v_j]$ be an interval. The contracted tournament is defined 
as $T_{\omega'}=(V_{\sigma'}=V_\sigma\setminus \{I\} \cup \{c_I \},A')$. The arc set  $A'$ is defined as follows.
\begin{itemize}
\item It contains all the arcs $A_1=\{uv~|~uv\in A, u\notin I,v \notin I \}$
\item Add $A_2=\{uc_I~|~uv\in A, u\notin I, v\in I \}$ and $A_3=\{c_Iv~|~uv\in A, u\in I, v\notin I \}$. 
\item Finally, we remove every forward arc involved in a $2$-cycle after the addition of arcs in the previous step. 
\end{itemize}
The order $\sigma'$ for $T_{\omega'}$ is provided by $\sigma'=v_1,\ldots,v_{i-1},c_I,v_{j+1},\ldots ,v_n$.  
We define the weight of a backward arc $e=xy$  of $A'$ as follows.  
\begin{equation*} 
w'(xy) = \left\{ 
\begin{array}{rl} 
w(xy) & \text{if } xy \in A_1\\ 
\sum_{\{xz\in A~|~z\in I\}}w(xz) & \text{if } xy\in A_2\\ 
\sum_{\{zy\in A~|~z\in I\}}w(zy) & \text{if } xy\in A_3 
\end{array} \right. 
\end{equation*} 
We refer to Figure $2$ for an illustration.

\end{definition}

\begin{figure}[h!]
\begin{center}
	\includegraphics[width=10cm]{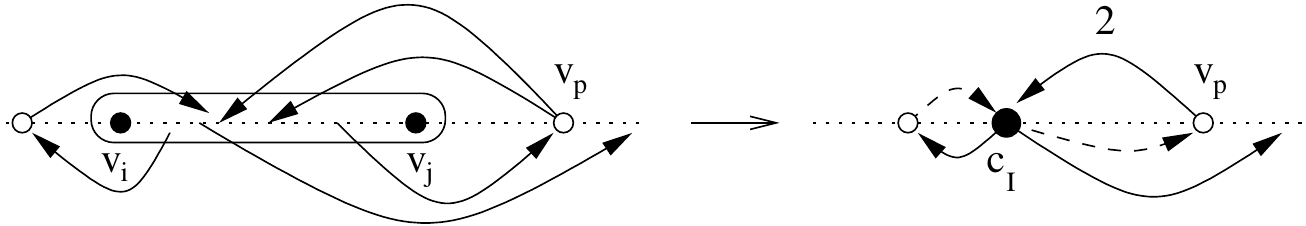}
\end{center}
\caption{Illustration of the contraction step for the interval $I := [v_i, \ldots, v_j]$.\label{fig:contraction}}
\end{figure}

Next we generalize the notions of certificate and certification  (Definitions~\ref{def:certificate} and~\ref{def:certify}) to backward weighted tournaments. 

\begin{definition}
\label{def:certificateweight}
Let $T_\omega = (V_\sigma, A)$ be a backward weighted tournament, and let $f = vu \in A$ be a 
backward arc of $T_\omega$. We call \emph{$\omega$-certificate} of $f$, and denote it by ${\cal C}(f)$, a collection of  
$\omega(f)$ arc-disjoint directed paths going from $u$ to $v$ and using only forward arcs in the span of $f$ in $T_\omega$.
\end{definition}



\begin{definition}
\label{def:certifyweighted}
Let $T_\omega = (V_\sigma, A)$ be a backward weighted tournament, and let $F \subseteq A$ be a 
subset of backward arcs of $T_\omega$. We say that we can \emph{$\omega$-certify} $F$ whenever it is possible to
find a set $\mathcal{F} = \{{\cal C}(f) : f \in F\}$ of arc-disjoint $\omega$-certificates for the arcs in $F$.
\end{definition}

\begin{lemma}
\label{lem:closure}
Let $T_\omega=(V_\sigma,A)$ be a backward weighted tournament such that for every interval $I := [v_i, \ldots, v_j]$ the following holds:

\begin{align}
 \label{eq:one}
	2 \cdot \omega(I) &\le |I| - 1\enspace
\end{align}
Then it is possible to $\omega$-certify the backward arcs of $T_\omega$.
\end{lemma}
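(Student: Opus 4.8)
The plan is to recast $\omega$-certification as a simultaneous arc-disjoint path packing and to establish feasibility by induction on $n=|V_\sigma|$, feeding in the hypothesis~(\ref{eq:one}) at every scale. First I would note that the ``in the span'' requirement is automatic: every forward arc increases the index, so any forward path from the lower endpoint $v_j$ of a backward arc $f=v_iv_j$ to its higher endpoint $v_i$ only visits vertices of index between $j$ and $i$. Hence $\omega$-certifying the backward arcs is exactly the problem of packing, for each backward arc $f=v_iv_j$, a family of $\omega(f)$ forward paths from $v_j$ to $v_i$, with all paths (over all $f$ and within each $f$) pairwise arc-disjoint; equivalently, an integral feasible routing in the acyclic digraph of forward arcs, each forward arc having capacity one, where $f$ issues a demand of $\omega(f)$ units from $v_j$ to $v_i$.

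Two consequences of~(\ref{eq:one}) drive the argument. Applied to the size-two intervals $\{v_s,v_{s+1}\}$ it gives $2\,\omega(\{v_s,v_{s+1}\})\le 1$, so (weights being positive integers) there is no backward arc between consecutive vertices; every arc $v_sv_{s+1}$ is forward and the ordering carries the forward path $v_1v_2\cdots v_n$. The key calibration, however, is a \emph{cut form} of~(\ref{eq:one}) that I would record: fix an interval $I=[v_a,v_b]$ and a gap between $v_t$ and $v_{t+1}$ inside it, and let $p=t-a+1$, $q=b-t$ be the sizes of the two sides. Writing $w$ for the total weight and $c$ for the number of backward arcs that lie inside $I$ and cross this gap, the number of forward arcs inside $I$ crossing the gap is $pq-c$. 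Since $(p-1)(q-1)\ge 0$ gives $pq\ge p+q-1=|I|-1$, hypothesis~(\ref{eq:one}) yields $pq\ge |I|-1\ge 2\,\omega(I)\ge 2w\ge w+c$ (using $c\le w$ because each backward arc has weight at least one), hence $pq-c\ge w$. Thus \emph{inside every interval, the forward arcs crossing any gap dominate the backward weight crossing it}. The intuition behind the factor $2$ is that each certificate path uses at least two forward arcs of its span, since the direct arc between its endpoints is backward; (\ref{eq:one}) is tuned to leave exactly this room.

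With these cut inequalities in hand I would run the induction by splitting at a gap between $v_t$ and $v_{t+1}$, which separates the backward arcs into those contained in $L=[v_1,v_t]$, those contained in $R=[v_{t+1},v_n]$, and those crossing the gap. Since~(\ref{eq:one}) is hereditary for sub-intervals, it holds for $L$ and $R$, so by induction their internal backward arcs are certified inside $L$ and inside $R$. It then remains to route the crossing demands from their left endpoints in $L$, across the gap, to their right endpoints in $R$, arc-disjointly from what has been spent; by Menger's theorem this succeeds once enough forward arcs survive, and the per-interval gap inequalities above guarantee a surplus of crossing forward arcs.

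The main obstacle is the \emph{coexistence} of the crossing certificates with the two recursively built families: a crossing path must travel inside $L$ from its left endpoint up to the boundary and inside $R$ from the boundary to its right endpoint, exactly the regions where the recursion has already consumed forward arcs. That this can genuinely fail without slack is shown by the $4$-vertex interval carrying the two crossing spans $[v_1,v_3]$ and $[v_2,v_4]$ of total weight $2$, which violates~(\ref{eq:one}) and is not certifiable. Making the induction go through therefore requires strengthening the inductive statement to a capacitated one, in which a prescribed reserve of forward arcs inside $L$ and $R$ is kept free for the crossing paths; proving that the reserve is always sufficient, driven by the inequality $pq\ge|I|-1\ge 2\,\omega(I)$ established above, is the technical heart of the proof.
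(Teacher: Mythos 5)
Your reformulation is sound---forward paths automatically stay in the span, so $\omega$-certification is indeed an integral arc-disjoint routing problem---and your cut inequality $pq-c\ge w$ is correctly derived from \eqref{eq:one}. But the proposal does not prove the lemma: it stops precisely where the proof has to begin. You split at a gap, certify $L$ and $R$ by induction, and must then route the crossing demands through forward arcs that the recursion may already have consumed; you acknowledge this coexistence problem, give a correct example showing it can genuinely fail, and then defer everything to a ``capacitated'' strengthening of the inductive statement whose sufficiency you describe as ``the technical heart of the proof.'' That heart is missing: the strengthened statement is never formulated, let alone proved. Worse, the tool you invoke for the crossing step, Menger's theorem, does not apply. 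The backward arcs crossing the gap have distinct heads in $L$ and distinct tails in $R$, so routing them is an integral multicommodity flow problem with specified source--sink pairs; cut conditions (including the one you derive) are not sufficient for such problems, and Menger provides arc-disjoint paths between two sets but cannot pair prescribed endpoints, nor guarantee disjointness from the already-chosen certificates.

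The paper's proof avoids multicommodity routing altogether by inducting on a different structure. Applying \eqref{eq:one} to $[v_1,\ldots,v_n]$ yields a vertex $v_i$ incident to no backward arc; call an interval \emph{critical} if $|I|\ge 2$ and $2\omega(I)=|I|-1$. If no interval is critical, delete $v_i$ and induct. If the only critical interval is the whole vertex set, certify the longest backward arc $e=vu$ above $v_i$ by the single two-arc path $uv_iv$, decrease $\omega(e)$ by one (reversing $e$ if its weight reaches zero), delete $v_i$, verify that \eqref{eq:one} still holds for every interval, and induct; maximality of $l(e)$ guarantees that a reversed $e$ is never used by the recursively obtained certificates. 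If some proper interval is critical, take a minimal one, contract it (a short computation shows contraction of a critical interval preserves \eqref{eq:one}), certify the contracted tournament by induction, and certify the inside of the interval separately via the previous case. Each step either produces one explicit certificate or cleanly decomposes the instance, which is exactly what your sketch lacks; the critical (tight) intervals the paper isolates are the precise obstruction that any ``reserve'' bookkeeping in your approach would have to track.
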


\begin{proof}
 Let $V_\sigma=v_1,\ldots,v_n$. 
The proof is by induction on $n$, the number of vertices. 
Note that by applying \eqref{eq:one} to the interval $I = [v_1, \hdots, v_n]$, we have that there exists a vertex $v_i$ in $T_\omega$ that is not incident to any backward arc. Let $T{_\omega'} = (V{_\sigma'}, A')$ denote the tournament $T_\omega \setminus \{v_i\}$. We say that an interval $I$ is \emph{critical} whenever $|I|\geq 2$ and $2 \cdot \omega(I)=|I| - 1$. We now consider several cases, based on different types of critical intervals.
\begin{enumerate}[(i)]
	\item \label{enum:one} Suppose that there are no critical intervals. Thus, in $T'_\omega$, every interval satisfies~(\ref{eq:one}), and hence by induction on $n$ the result holds.
	\item \label{enum:two} Suppose now that the only critical interval is  $I = [v_1, \hdots, v_n]$, and let $e = vu$ be a backward arc above $v_i$ with the maximum length. Note that since $v_i$ does not belong to any backward arc, we can use it to form a directed path $c(e)=uv_iv$, which is a certificate for $e$. We now consider $T{_\omega'}$ where the weight of $e$ has been decreased by $1$. In this process if $\omega(e)$ becomes $0$ then we reverse the arc $e$. We now show that every interval of $T{_\omega'}$ respects~\eqref{eq:one}. If an interval $I' \in T{_\omega'}$ does not contain $v_i$ in the corresponding interval in $T_\omega$, then  by our assumption we have that $2 \cdot \omega(I') \le |I'| - 1$. Now we assume that the interval  corresponding to $I'$ in $T_\omega$ contains $v_i$ but either $u\notin I'\cup\{v_i\}$ or  $v\notin I'\cup\{v_i\}$. Then we have $2 \cdot \omega(I') = 2 \cdot \omega(I) < |I| - 1 = |I'| $ and hence we get that  
$2 \cdot \omega(I') \leq |I'| - 1$. Finally, we assume that the interval  corresponding to $I'$ in $T_\omega$ contains $v_i$ 
and  $u,v\in I'\cup\{v_i\}$. 
In this case, $2 \cdot \omega(I') = 2 \cdot (\omega(I) - 1) \leq |I| - 1 - 2 < |I'| - 1$. Thus, by the induction hypothesis, we obtain a family of arc-disjoint $\omega$-certificates $\mathcal{F'}$ which $\omega$-certify the backward arcs of $T_\omega'$. Observe that the maximality of $l(e)$ ensures that if $e$ is reversed then it will not be used in any $\omega$-certificate of $\mathcal{F'}$, thus implying that $\mathcal{F'} \cup c(e)$ is a family $\omega$-certifying the backward arcs of $T_\omega$.

\item \label{enum:three} Finally, suppose that there exists a critical interval $I \subsetneq V_\sigma$. Roughly speaking, we will show that $I$ and $V_\sigma \setminus I$ can be certified separately. To do so, we first show the following.

\emph{Claim}. Let $I \subset V_\sigma$ be a critical interval. Then the tournament $T_{\omega'}=(V_{\sigma'},A')$ obtained from $T_\omega$ by contracting $I$ satisfies the conditions of the lemma.

\begin{proof}
Let $H'$ be any interval of $T_{\omega'}$. As before if $H'$ does not contain $c_I$ then the result holds by hypothesis. Otherwise, let $H$ be the interval corresponding to $H'$ in $T_\omega$. We will show that $2\omega(H') \le |H'| - 1$. By hypothesis, we know that $2\omega(H) \le |H| - 1$ and that $2\omega(I) = |I| - 1$. Thus we have the following. 
\begin{eqnarray*}
	2 \omega(H')  =  2 \cdot (\omega(H) - \omega(I)) 
		       \le  |H| - 1 - |I| + 1 
		       =  (|H| + 1 - |I|) - 1 
		       =  |H'| - 1
\end{eqnarray*}
Thus, we have shown that the tournament $T_{\omega'}$ satisfies the conditions of the lemma.
\end{proof}
We now consider a minimal critical interval $I$. By induction, and using the claim, we know that we can obtain a family of arc-disjoint $\omega$-certificates $\mathcal{F'}$ which $\omega$-certifies the backward arcs of $T_{\omega'}$ without using any arc within $I$. Now, by minimality of $I$, we can use~(\ref{enum:two}) to obtain a family of arc-disjoint $\omega$-certificates $\mathcal{F''}$ which $\omega$-certifies the backward arcs of $I$ using only arcs within $I$. Thus, $\mathcal{F'} \cup \mathcal{F''}$ is a family $\omega$-certifying all backward arcs of $T_\omega$.
\end{enumerate}
This concludes the proof of the lemma. 
\end{proof}

In the following, any interval that does not respect condition \eqref{eq:one} is said to be a \emph{dense interval}. 

\begin{lemma}
\label{lem:safepartition}
Let $T_\omega=(V_\sigma,A)$ be a backward weighted tournament with $ |V_\sigma| \ge 2p + 1$ and $\omega(V_{\sigma}) \le p$.
Then there exists a safe partition of $V_\sigma$ with at least one backward arc between the intervals and it can be computed in polynomial time.
\end{lemma}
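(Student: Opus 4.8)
The plan is to reduce the statement to an application of Lemma~\ref{lem:closure} on a suitable contraction of $T_\omega$, and then to pull the $\omega$-certificates it produces back to certificates of the between-arcs that live entirely inside $A_B$. The starting observation is that the hypotheses force the whole interval to be non-dense: since $2\cdot\omega(V_\sigma)\le 2p\le |V_\sigma|-1$, the interval $V_\sigma$ itself satisfies~\eqref{eq:one}. Hence the only obstruction to $\omega$-certifying \emph{all} backward arcs of $T_\omega$ directly via Lemma~\ref{lem:closure} is the presence of \emph{dense} sub-intervals, and the entire task is to absorb these into the parts of the partition $\mathcal{P}$ we build, so that after contracting each part the resulting backward weighted tournament satisfies~\eqref{eq:one} everywhere.

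Concretely, I would repeatedly locate a minimal (inclusion-wise) dense interval and contract it using the Contraction operation, leaving every vertex that lies in no dense interval as a singleton part. A one-line weight/length computation (the same as in the Claim inside the proof of Lemma~\ref{lem:closure}, with equality replaced by the inequality $2\omega(I)\ge|I|$) shows that contracting a dense proper interval strictly lowers the total backward weight and keeps the \emph{whole} tournament non-dense: if $2W<N$ before the step then $2W'<N'$ after it. Consequently the process never collapses everything, terminates because the number of vertices strictly drops, and halts at a contracted tournament $T_{\omega'}$ on at least two vertices in which no interval is dense, i.e.\ which meets~\eqref{eq:one} everywhere. Lemma~\ref{lem:closure} then supplies a family of arc-disjoint $\omega$-certificates for all backward arcs of $T_{\omega'}$, and by construction these are exactly the backward arcs of $A_B$ for the partition $\mathcal{P}$; Lemma~\ref{lem:fas} and Rule~\ref{rule:safepartition} do the rest once certification inside $A_B$ is established.

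The heart of the argument, and the step I expect to be the main obstacle, is to realise the certificates of $T_{\omega'}$ as certificates that use \emph{only} arcs of $A_B$, and to guarantee simultaneously that at least one backward arc genuinely lies between two parts. The enabling fact is that the certificates built by Lemma~\ref{lem:closure} are assembled from length-two forward paths $u\,w\,v$ whose interior vertex $w$ is incident to no backward arc; such a path crosses a part boundary at each of its two arcs and meets its middle part in the single vertex $w$, so whenever $w$ is a \emph{singleton} part it lifts verbatim to the original tournament using two arcs of $A_B$, keeping the circuit it forms with the backward arc inside $A_B$ and needing no routing inside any part. The delicate point is to show that every interior vertex can be taken to be such a free singleton \emph{located in the span of the arc it certifies}: within the span of each backward arc the weight is at most half the length by~\eqref{eq:one}, while the global slack $|V_\sigma|\ge 2p+1$ against $\omega(V_\sigma)\le p$ leaves at most $2p$ vertices incident to any backward arc and hence an abundance of free singletons to route through arc-disjointly. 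This same slack is what I would use to avoid contracting away the last unit of backward weight, so that a between-part backward arc survives whose span still contains free singletons; reconciling certifiability of that surviving crossing arc with the requirement that certificates stay in $A_B$ is exactly where the hypothesis $|V_\sigma|\ge 2p+1$ is essential, and is the technically demanding part.

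The polynomial running time is then immediate: the values $\omega(I)$ are read off from prefix sums of backward-arc weights, a minimal dense interval is found by a single scan, each Contraction runs in linear time and strictly reduces the number of vertices so at most $n$ iterations occur, and the inductive certificate construction of Lemma~\ref{lem:closure} is itself polynomial.
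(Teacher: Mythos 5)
Your global plan is the same as the paper's: the paper proves this lemma by induction on $n$, contracting a dense interval $I$, checking that the hypotheses survive (with $r = p - \tfrac{1}{2}|I|$, exactly your computation $2W' \le 2W - |I| < N - |I| = N'-1$), obtaining a safe partition of the contracted tournament, and then lifting its $\omega$-certificates back; your iterated-contraction variant is equivalent to that induction, and your termination and preservation arguments are fine. The genuine gap is precisely the step you flag as "the technically demanding part" and then do not carry out: how to pull certificates of the contracted tournament $T_{\omega'}$ back to certificates of $T_\omega$ living in $A_B$. The paper does \emph{not} do this by routing through "free singletons." It uses a structural property of the contraction operation: since forward arcs involved in $2$-cycles are deleted when $I$ is contracted, the mere existence of a \emph{forward} arc $c_1c_I$ (or $c_Ic_2$) in $T_{\omega'}$ guarantees that \emph{every} original arc between $c_1$ and the interval $I$ is forward. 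Hence if $c_I$ occurs as an interior vertex of a certificate it can be replaced by an arbitrary vertex of $I$, and if $c_I$ is the first or last vertex of a certificate (i.e.\ the certified arc of $A_B$ has its true endpoint $e_I$ inside $I$) it can be replaced by that specific endpoint $e_I$; arc-disjointness is inherited. This disposes of certificates passing through contracted parts, so no counting of uncontracted vertices is ever needed.

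Your proposed substitute does not close the gap, for three concrete reasons. First, Lemma~\ref{lem:closure} applied as a black box to $T_{\omega'}$ gives you no control over whether the middle vertices of its certificates are original singleton parts or contracted vertices $c_I$; its statement only asserts existence, and its own proof may well pick a $c_I$ (a contracted dense interval can easily be incident to no backward arc of $T_{\omega'}$). Second, the quantitative claim of an "abundance of free singletons" is false in the worst case: the hypotheses $|V_\sigma|\ge 2p+1$, $\omega(V_\sigma)\le p$ only guarantee $|V_\sigma|-2p\ge 1$ vertices incident to no backward arc (take $p$ pairwise vertex-disjoint backward arcs on $2p+1$ vertices), while backward arcs sharing an endpoint need \emph{distinct} middle vertices to keep certificates arc-disjoint; arranging well-placed free middle vertices arc-disjointly is a Hall-type matching problem, i.e.\ you would be re-proving Lemma~\ref{lem:closure} rather than invoking it. Third, you never address the endpoint case at all: a backward arc of $A_B$ with an endpoint inside a contracted part has a certificate that \emph{starts or ends} at $c_I$ in $T_{\omega'}$, and lifting it must reach the particular original endpoint $e_I$, not some convenient singleton; the only way to know the needed arc (say $e_Ic_1$) is forward is again the $2$-cycle-removal property of contraction. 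So the proposal, while aimed at the right reduction, is missing the one idea the paper's proof actually turns on.
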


\begin{proof}
The proof is by induction on $n=|V_\sigma|$. Observe that the statement is true for $n = 3$,  which is our base case.

For the inductive step, we assume first that there is no dense interval in $T_\omega$. In this case Lemma~\ref{lem:closure} ensures that 
the partition of $V_\sigma$ into singletons of vertices is a safe partition. So from now on we assume that there exists 
at least one dense interval. 

Let $I$ be a dense interval. By definition of $I$, we have that $\omega(I) \geq \frac{1}{2} \cdot |I|$. We now contract $I$ and obtain the  backward weighted tournament $T_{\omega'} = (V_{\sigma'}, A')$. In the contracted tournament $T_{\omega'}$,  we have:
\begin{align*}
	\begin{cases}
			|V_{\sigma'}| &\geq 2p + 1 - (|I| - 1) = 2p - |I| + 2;\\
			\omega'(V_{\sigma'}) &\leq p - \frac{1}{2} \cdot |I|.
	\end{cases}
\end{align*}
\noindent
Thus, if we set $r := p - \frac{1}{2} \cdot |I|$, we get that $|V_{\sigma'}| \geq 2r + 1$ and 
$\omega'(V_{\sigma'}) \leq r$. Since $|V_{\sigma'}|<|V_{\sigma}|$, by the induction hypothesis we can find 
a safe partition $\mathcal{P}$ of $T_{\omega'}$, and thus obtain a family $\mathcal{F}_{\omega'}$  
that $\omega$-certifies the backward arcs of $T_{\omega'}[A_B]$ using only arcs in $A_B$. 

We claim that $\mathcal{P}'$ obtained from $\cal P$ by substituting $c_I$ by its corresponding interval $I$ is 
a safe partition in $T_\omega$. To see this, first observe that if $c_I$ has not been used to $\omega$-certify the backward arcs in $T_{\omega'}[A_B]$, that is,  
$c_I$ is not an end point of any arc in the $\omega$-certificates, then we are done. So from now on we assume that $c_I$ has been part of a $\omega$-certificate for some backward arc. Let $e$ be a backward arc in 
$T_{\omega'}[A_B]$, and let $c_{\omega'}(e) \in \mathcal{F}_{\omega'}$ be a $\omega$-certificate of $e$. 
First we assume that $c_I$ is not the first vertex of the certificate $c_{\omega'}(e)$ (with respect to ordering $\sigma'$),  and let $c_1$ and $c_2$ be the left (in-) and right (out-) neighbors of $c_I$ in $c_{\omega'}(e)$. By definition of the contraction 
step together with the fact that there is a forward arc between $c_1$ and $c_I$ and between $c_I$ and $c_2$ in 
$T_{\omega'}$, we have that there were no backward arcs between any vertex in the interval corresponding to $c_I$ and $c_1$ and $c_2$ in the original tournament $T_\omega$. So we can always find a vertex in $I$ to replace $c_I$ in 
$c_{\omega'}(e)$, thus obtaining a certificate $c(e)$ for $e$ in $T_{\omega}[A_B]$ (observe that $e$ remains a 
backward arc even in $T_\omega$). Now we assume that $c_I$ is either a first or last vertex in the certificate 
$c_{\omega'}(e)$. Let $e'$ be an arc corresponding to $e$ in $T_{\omega'}$  with one of its endpoints being 
$e_I \in I$. To certify $e'$ in $T_\omega[A_B]$, we need to show that we can construct a certificate 
$c(e')$ using only arcs of $T_\omega[A_B]$. We have two cases to deal with. 
\begin{enumerate}[(i)]
	\item If $c_I$ is the first vertex of $c_{\omega'}(e)$ then let $c_1$ be its right neighbor in $c_{\omega'}(e)$. Using the same argument 
	as before, there are only forward arcs between any vertex in $I$ and $c_1$. In particular, there is a forward 
	arc $e_Ic_1$ in $T_\omega$, meaning that we can construct a $\omega$-certificate for $e'$ in $T_\omega$ by setting 
	$c(e') := (c_{\omega'}(e) \setminus \{c_I\}) \cup \{e_I\}$.

\begin{figure}[h!]
\begin{center}
	\includegraphics[width=10cm]{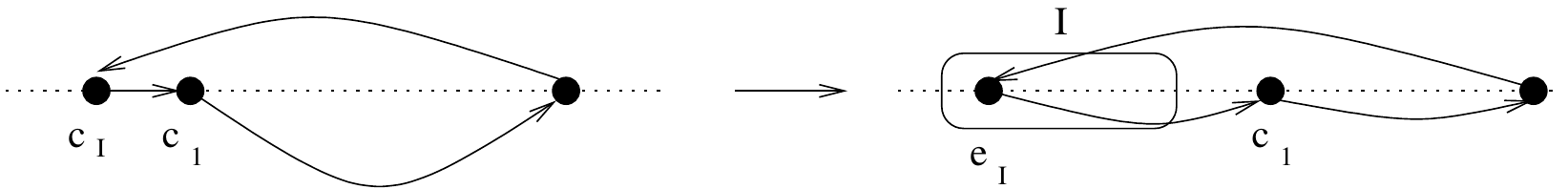}
\caption{On the left, the $\omega$-certificate $c_{\omega'}(e) \in \mathcal{F}_{\omega'}$. On the right, the corresponding $\omega$-certificate obtained in $T_\omega$ by replacing $c_I$ by the interval $I$.}
\label{fig:wcontraction}
\end{center}
\end{figure}

	\item If $c_I$ is the last vertex of $c_{\omega'}(e)$ then let $c_q$ be its left neighbor in $c_{\omega'}(e)$. Once again, we have that there are only forward arcs between $c_q$ and vertices in $I$, and thus between $c_q$ and $e_I$. So using this 
	we can construct a $\omega$-certificate for $e'$ in $T_\omega$.
\end{enumerate}
Notice that the fact that all $\omega$-certificates are pairwise arc-disjoint in $T_{\omega'}[A_B]$ implies that the corresponding $\omega$-certificates are arc-disjoint in $T_\omega[A_B]$, and so $\mathcal{P}'$ is indeed a safe partition of $V_\sigma$.
\end{proof}

We are now ready to give the linear size kernel for \FAST. To do so, we make use of the fact that there exists a polynomial time approximation scheme 
for this problem~\cite{MathieuS07}.
\begin{theorem}
For every fixed $\epsilon > 0$, there exists a vertex kernel for \FAST\ with at most $(2 + \epsilon)k$ vertices that can be computed in polynomial time. 
\end{theorem}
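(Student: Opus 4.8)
The plan is to combine the approximation scheme with the safe-partition machinery developed above. First I would run the PTAS of Kenyon-Mathieu and Schudy on the input tournament $T$ to obtain an ordering $\sigma$ of the vertices whose number of backward arcs is at most $(1+\delta)\cdot fas(T)$, where $\delta>0$ is a small constant I will choose at the end in terms of $\epsilon$. If the number of backward arcs in this near-optimal ordering already exceeds $(1+\delta)k$, then $fas(T)>k$ and I can safely reject the instance. So from now on I may assume $T_\sigma$ has at most $(1+\delta)k$ backward arcs.

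The key idea is to view $T_\sigma$ as a backward weighted tournament $T_\omega$ in which every backward arc has weight $1$, so that $\omega(V_\sigma)\le (1+\delta)k=:p$. If $|V_\sigma|\ge 2p+1$, then by Lemma~\ref{lem:safepartition} there is a safe partition of $V_\sigma$ with at least one backward arc between the intervals, computable in polynomial time; applying Rule~\ref{rule:safepartition} then reverses those between-interval backward arcs, strictly decreasing $k$ and removing at least one backward arc. I would repeat this process, re-running the approximation scheme (or re-examining the current ordering) after each reversal, until the instance no longer has $|V_\sigma|\ge 2p+1$ for the current bound $p$. Since each application of Rule~\ref{rule:safepartition} strictly decreases the total number of backward arcs, the process terminates in polynomially many rounds. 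When it halts, the reduced tournament satisfies $|V_\sigma|\le 2p=2(1+\delta)k=(2+2\delta)k$ vertices.

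The final step is bookkeeping: choosing $\delta:=\epsilon/2$ gives the desired bound of $(2+\epsilon)k$ vertices. I would also note, as in the soundness arguments for the earlier rules, that reversing the between-interval backward arcs of a safe partition preserves the answer by Lemma~\ref{lem:fas} and the soundness of Rule~\ref{rule:safepartition}, so the reduced instance is equivalent to the original. The equivalence established by Lemma~\ref{lem:safepartition} (namely that the substituted partition $\mathcal{P}'$ is genuinely safe in the uncontracted tournament) is exactly what lets me certify and then reverse the between-interval arcs legitimately.

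The main obstacle I anticipate is the interplay between the approximation bound and the weighting. The PTAS guarantees a near-optimal \emph{number} of backward arcs, but Lemma~\ref{lem:safepartition} is phrased for the \emph{weighted} quantity $\omega(V_\sigma)$; I must be careful that after reversing arcs via Rule~\ref{rule:safepartition} the resulting ordering is still close enough to optimal for the bound $\omega(V_\sigma)\le p$ to hold with the same $p$, which is why re-running the approximation scheme (rather than reusing the stale ordering) after each reversal is the safe choice. A secondary subtlety is ensuring the iteration terminates in polynomial time and that the cumulative decrease in $k$ is accounted for correctly, but since each round strictly reduces the backward-arc count by at least one while keeping $p$ bounded by $(1+\delta)$ times the current optimum, termination and the final vertex bound both follow.
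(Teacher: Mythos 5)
Your proposal follows essentially the same route as the paper's proof: run the PTAS with accuracy $\epsilon/2$, order the tournament by the transitive ordering obtained after reversing the approximate feedback arc set (so the ordering has at most $(1+\frac{\epsilon}{2})k$ backward arcs, viewed as a unit-weight backward weighted tournament), invoke Lemma~\ref{lem:safepartition} whenever the vertex count exceeds the threshold, reverse the between-interval arcs via Rule~\ref{rule:safepartition}, and iterate, with the stopping condition delivering the $(2+\epsilon)k$ bound.

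There is, however, one gap. Your loop's only exits are ``vertex count below $2p+1$'' and ``PTAS output too large, reject,'' but Rule~\ref{rule:safepartition} can make the current tournament acyclic (for instance when the safe partition's between-interval arcs comprise \emph{all} backward arcs) while the vertex count is still far above the threshold and $k$ may even have dropped to $0$. At that point $\omega(V_\sigma)=0$, so Lemma~\ref{lem:safepartition} cannot supply a safe partition containing a backward arc, and your iteration stalls on what is in fact a \textsc{Yes}-instance that still must be shrunk to at most $(2+\epsilon)k$ vertices. The paper closes exactly this hole by applying Rule~\ref{rule:uselessvertex} after each application of Rule~\ref{rule:safepartition} --- in an acyclic tournament no vertex lies in a directed triangle, so Rule~\ref{rule:uselessvertex} deletes every vertex and leaves a trivial kernel --- and by adding ``$k=0$'' as an explicit stopping condition. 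A secondary quibble: your termination argument appeals to ``the total number of backward arcs strictly decreases,'' which is not a well-defined measure across rounds since you recompute the ordering each time; the clean measure is that $k$ strictly decreases (which you also note), giving at most $k$ rounds. With the Rule~\ref{rule:uselessvertex} patch (or an explicit ``if the current tournament is acyclic, output a trivial \textsc{Yes}-instance''), your argument coincides with the paper's.
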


\begin{proof}
Let $(T = (V,A),k)$ be an instance of \FAST. For a fixed $\epsilon > 0$, we start by computing a \fas\ $S$ of size 
at most $(1 + \frac{\epsilon}{2})k$. To find such a set $S$, we use the known polynomial time approximation scheme 
for \FAST~\cite{MathieuS07}. Then, we order $T$ with the transitive ordering of the tournament obtained by reversing every arc of $S$ in $T$. Let $T_\sigma$ denote the resulting ordered tournament. By the upper bound on the size of $S$, we know that $T_\sigma$ has at most $(1 + \frac{\epsilon}{2})k$ backward arcs. Thus, if $T_\sigma$ has more than $(2 + \epsilon)k$ vertices then Lemma~\ref{lem:safepartition} ensures that we can find a safe partition with at least one backward arc between the intervals in polynomial time. Hence we can reduce the tournament  by 
applying Rule~\ref{rule:safepartition}. We then apply Rule~\ref{rule:uselessvertex}, and repeat the previous steps 
until we do not find a safe partition or $k = 0$. In the former case, we know by Lemma~\ref{lem:safepartition} that $T$ can have at most $(2 + \epsilon)k$ vertices, thus implying the result. In all other cases we return \textsc{No}.
\end{proof}

\section{Conclusion}
\label{sec:concl}
In this paper we obtained linear vertex kernel for \FAST{}, in fact, a vertex kernel of size $(2+\epsilon)k$ for any fixed $\epsilon >0$. 
The new bound on the kernel size improves the previous known bound of $O(k^2)$ on the vertex kernel size for \FAST{} given in ~\cite{ALS09,DGHNT06}. It would be interesting to see if one can obtain kernels for other problems using either polynomial time approximation schemes or a constant factor approximation algorithm for the corresponding problem.  An interesting problem which remains unanswered is, whether there exists a linear or even a $o(k^2)$ vertex kernel for the {\sc $k$-Feedback Vertex Set in Tournaments ($k$-FVST)} problem. In the 
{\sc $k$-FVST} problem we are given a tournament $T$ and a positive integer $k$ and the aim is to find a set of at most $k$ vertices whose deletion makes the input tournament acyclic. The smallest known kernel for {\sc $k$-FVST} 
has size $O(k^2)$.

\bibliographystyle{plain}
\bibliography{fast}

\end{document}